\documentclass[conference]{IEEEtran}
\usepackage{cite}
\ifCLASSINFOpdf
  \usepackage[pdftex]{graphicx}
\else
  \usepackage[dvips]{graphicx}
  \graphicspath{{../eps/}}
  \DeclareGraphicsExtensions{.eps}
\fi
\usepackage{tikz}
\usepackage{tikzscale}
\usepackage{pgf}
\usepackage{tikz-network}

\usepackage[T1]{fontenc} 
\usepackage{physics}
\usepackage{hyperref}
\usepackage{booktabs,tabularx}
\usepackage{authblk}

\hypersetup{
    bookmarks=true,         % show bookmarks bar?
    unicode=false,          % non-Latin characters in Acrobat’s bookmarks
    pdftoolbar=true,        % show Acrobat’s toolbar?
    pdfmenubar=true,        % show Acrobat’s menu?
    pdffitwindow=false,     % window fit to page when opened
    pdftitle={},    % title
    pdfauthor={},     % author
    pdfsubject={},   % subject of the document
    pdfcreator={},   % creator of the document
    pdfproducer={},  % producer of the document
    pdfkeywords={} {} {}, % list of keywords
    pdfnewwindow=true,      % links in new window
    colorlinks=false,       % false: boxed links; true: colored links
    linkcolor=red,          % color of internal links
    citecolor=green,        % color of links to bibliography
    filecolor=magenta,      % color of file links
    urlcolor=cyan           % color of external links
}

\usepackage{amsmath,amssymb, amsthm, bm}

\newtheorem{proposition}{Proposition}
\newtheorem{observation}{Observation}
\newtheorem{procedure}{Procedure}
\usepackage{algorithmic}
\usepackage{enumitem}

\usepackage{array}
\ifCLASSOPTIONcompsoc
  \usepackage[caption=false,font=normalsize,labelfont=sf,textfont=sf]{subfig}
\else
  \usepackage[caption=false,font=footnotesize]{subfig}
\fi

\usepackage{stfloats}
\usepackage{url}
\begin{document}
%
% paper title
% Titles are generally capitalized except for words such as a, an, and, as,
% at, but, by, for, in, nor, of, on, or, the, to and up, which are usually
% not capitalized unless they are the first or last word of the title.
% Linebreaks \\ can be used within to get better formatting as desired.
% Do not put math or special symbols in the title.
% \title{How Entanglement-Enabled Connectivity Can Boost the Next-Generation Communication and Computing — A Tutorial}

\title{Lossless Address Coding for Quantum Networks}

% use for special paper notices
%\IEEEspecialpapernotice{(Invited Paper)}
% author names and affiliations
% use a multiple column layout for up to three different
% affiliations
%\author{\IEEEauthorblockN{Dick Maryopi}
%\thanks{The authors are with the Department of Electronic Engineering, University of York, Heslington, York, UK. email: dm1110@york.ac.uk.}
%}

\vspace{-0.25cm}
\author[]{Dick~Maryopi}

\affil[]{\normalsize University of Naples Federico II,  Naples, 80125 Italy.\\
dick.maryopi@unina.it}
\vspace{-0.25cm}

%\thanks{}
%\renewcommand{\Authfont}{\normalsize}
% make the title area
\maketitle
\begin{abstract}
As quantum systems advance toward interconnected architectures, the ability to identify nodes, manage resources, and support network‑level functions becomes increasingly critical. In this work, we propose a lossless source coding scheme for addressing in quantum networks that enables compact, hierarchical, and coherently processable quantum address states. Specifically, we introduce a prefix–suffix address space and develop an isometric hierarchical encoder-decoder that guarantees unique decodability. We further provide a practical Huffman‑based procedure that embeds prefix‑free, length‑eigenstate codewords into the address space, thereby preserving isometry. The scheme is particularly designed for networks with hierarchical structure, heterogeneous cluster sizes, and configurable address assignment to accommodate dynamic network conditions. A numerical example on a 13‑node network demonstrates that the proposed hierarchical encoding scheme is feasible and achieves perfect fidelity. This work establishes a rigorous connection between source coding theory and quantum network design, offering a practical framework towards scalable and coherent quantum addressing.
\end{abstract}

% Note that keywords are not normally used for peer-reviewed papers.
\begin{IEEEkeywords}
Quantum Networks, Quantum Addressing, Quantum Compression, Lossless Source Coding.
\end{IEEEkeywords}

\IEEEpeerreviewmaketitle
%===================================================================================================================
\section{Introduction}%% It should be informative, not so long and not too short %===================================================================================================================
We are currently witnessing a crucial transition in the field of quantum technologies, in which the development of quantum computing is progressing from isolated systems toward interconnected systems that form quantum networks. This networking of quantum systems not only enhances the overall computational capabilities but also enables a wide range of applications, including secure quantum communication, high-precision sensing, and many other distributed quantum information processing tasks. Given this potential, it is envisioned that quantum networks will evolve into a larger-scale, and globally interconnected architecture, referred to as the quantum internet \cite{Kimble2008, Dur2017, Wehner2018, Cacciapuoti2020,RFC9340}.

However, extending quantum systems to large-scale networks, particularly at a global level, is far from trivial and has been a challenging research effort. This is mainly due to the fact that the networks must satisfy fundamentally different physical constraints, such as coherence preservation, restriction of measurement, and the non-clonability of unknown quantum information. Consequently, most information-processing techniques in classical networks cannot be directly applied in the quantum setting. In addition, entanglement, which will serve as the fundamental resource in quantum networking, still faces inherent technical challenges in its generation, storage, and distribution, despite its beneficial properties that enable flexible networking. 

All of those factors lead to time-varying entanglement network topologies that are dependent on the availability of entanglement and its manipulation, thereby creating a highly dynamic condition. In addition, practical quantum networks should accommodate an increasing number of nodes and a wide range of network sizes. In such networks, identifying participating nodes, determining the source and destination of the communicating parties, allocating resources among nodes, and supporting many other functionalities become even more challenging. These tasks all rely on an addressing mechanism, where directing to an incorrect address may have a significant impact. Hence, addressing should operate in a lossless manner and remain valid as the network grows, despite the steady changes in the entanglement network's topology.

Until recently, only a limited number of studies have investigated the problem of addressing within quantum networks \cite{10299662, PhysRevResearch.4.043064, 11322738, MiguelRamiro2021SuperposedTasks,Pirker2026QuantumAddressing}. Among these works, the initial concepts of addressing were introduced in \cite{10299662} in the context of the quantum internet. It proposed a quantum-native addressing functionality that shifts from a classical location-aware addressing scheme to one built on an entangled overlay network via link augmentation. In contrast, \cite{PhysRevResearch.4.043064} proposed a classical–quantum hybrid frame structure that carries classical addresses for quantum data payloads. The work in \cite{11322738} extends quantum-native addressing to enable scalable, compact routing, in which the matching-and-forward operation is performed via quantum address splitting using Schrödinger’s oracle. Unlike \cite{11322738}, which uses a superposition of node identifiers to achieve scalability, the work in \cite{MiguelRamiro2021SuperposedTasks} uses addressing to perform the superposition of tasks. Meanwhile, \cite{Pirker2026QuantumAddressing} explores quantum addresses that can be in entangled states to facilitate quantum routing.  

While these works establish important mechanisms for quantum addressing, they do not fundamentally resolve the question of how the address should be represented so that it is flexible enough to accommodate the structure and dynamics of the networks while satisfying the quantum-physical constraints. In this work, we approach this question from an information-theoretic coding perspective. In this context, naming or labeling a set of objects is essentially a natural instance of a source coding task \cite{PhysRevA.65.032313}. Therefore, we use this coding-theoretic lens and show that the challenge of addressing in quantum networks can be viewed as a lossless coding problem, where address should be represented as decodable quantum states and processed coherently to remain meaningful under a dynamic environment.

To that end, we propose a lossless source coding scheme for quantum networks, particularly those with a hierarchical structures, to support scalability. The main contributions of this work can be summarized as follows:
\begin{enumerate}
    \item We provide a framework for addressing in quantum networks by treating node identification as an information-theoretic lossless source coding problem.
    \item We design an isometric hierarchical encoder and its adjoint decoder, and prove unique decodability and coherent (measurement-free) reconstruction.
    \item We develop a concrete encoding procedure that uses spectral diagonalization and Huffman-based prefix/suffix codes to produce codewords suitable for dynamical addressing.
    \item We demonstrate the scheme numerically on a toy example of a 13-node network, showing negligible recovery and isometry errors, and validate a hierarchical lossless encoding in practice.
\end{enumerate}

The remainder of the paper is organized as follows. Section \ref{system_model} introduces the system model considered in this paper. Section \ref{problem_formulate} formulates the addressing problem and states the design requirements. Section \ref{coding scheme section} develops the core contribution, that is, a hierarchical lossless‑coding construction, including the prefix–suffix structure and the isometric encoder/decoder. Section \ref{Section encoding procedure} presents a concrete, implementable encoding procedure (Huffman‑based spectral coding) and discusses practical design choices. Section \ref{numerical_example} illustrates the approach with a numerical example that validates the construction on a small network. Finally, Section \ref{conclusion} summarizes the results and outlines directions for future work.

\begin{figure}
    \centering
    \includegraphics[width=\columnwidth, height=0.3\textheight, keepaspectratio]{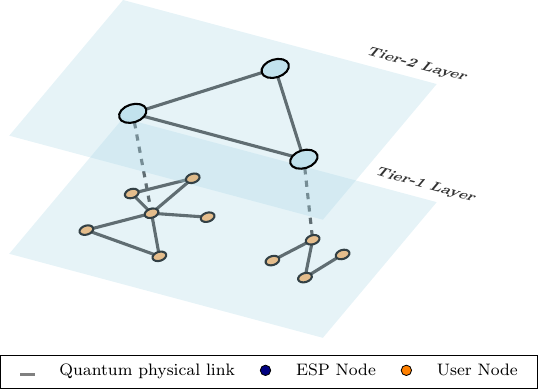}
    \caption{A simple hierarchical quantum network model considered in this paper. Network nodes are partitioned into two functional layers: 1) the tier‑2 layer serves as an entanglement service provider (ESP), and 2) the tier‑1 layer consists of user nodes organized into clusters that consume the distributed entanglement. }
    \label{fig:model}
\end{figure}
%===================================================================================================================
\section{System Model}\label{system_model}
%===================================================================================================================
We consider a quantum network represented as a graph
\begin{equation}
    G = (V, E),
\end{equation}
where $V$ is the set of nodes with cardinality $|V|=n$ and $E \subseteq V\times V$ is the set of quantum links that physically connect the nodes. The node set $V$ is bipartitioned into disjoint two-tier network layers as depicted in Fig.~\ref{fig:model}. Each layer is designed to serve different functions and has distinct characteristics \cite{11322738}.
In the tier-1 layer we have a set of user nodes that need to consume entanglement resources. It induces a subgraph given by
\begin{equation}
    G_\mathrm{u}=G[V_u]=(V_\mathrm{u}, E_\mathrm{u}), \text{ with } |V_\mathrm{u}|=n_\mathrm{u}.
\end{equation}
In the tier-2 layer we have a set of Entanglement Service Provider (ESP) nodes, which is responsible for distributing entanglement to user nodes in the tier-1 layer. The tier-2 layer induces a subgraph given by
\begin{equation}
    G_\mathrm{e}=G[V_\mathrm{e}]=(V_\mathrm{e}, E_\mathrm{e}), \text{ with } |V_\mathrm{e}|=n_\mathrm{e}.
\end{equation}
We require that all nodes be partitioned into these two layers in such a way that the partition satisfies
\begin{equation}
    V = V_\mathrm{e} \cup V_\mathrm{u}, \, V_\mathrm{e} \cap V_\mathrm{u} = \emptyset. %\text{ and }\, n=n_\mathrm{e}+n_{\mathrm{u}}.
\end{equation}

To ensure the network functions are scalable as the network grows, we further partition all nodes in $V$ into $n_\mathrm{e}$ clusters. We denote the resulting $j$-th cluster by the subgraph
\begin{align}
    &G_{j} = (V_{j}, E_{j}), \text{ for }\, j=1, \dots, n_\mathrm{e}, \,\\
    &\text{ such that } \nonumber \\ 
    &V=\bigcup_{j=1}^{n_\mathrm{e}} V_{j}, \text{ with } V_{j}\cap V_{j'}=\emptyset \text{ for all }j\neq j'. \nonumber
\end{align} 
Thus, every node $v_i\in V$ belongs to exactly one cluster $V_{j}$. To capture a more realistic modeling of a heterogeneous network scenario, we allow the clusters to have varying sizes $|V_{j}|$. 
In this work, we consider that each cluster $V_{j}$ is served by exactly one ESP node 
$v_{\mathrm{e}, j}\in V_\mathrm{e}=\{v_{\mathrm{e}, 1}, \dots, v_{\mathrm{e}, n_\mathrm{e}}\}$. 
Accordingly, each cluster $j$ is composed of the ESP node $v_{\mathrm{e},j}$ and a (possibly empty) set of user nodes $V_{\mathrm{u},j}\subseteq V_\mathrm{u}$, i.e.,
\[
    V_j = \{v_{\mathrm{e},j}\} \cup V_{\mathrm{u},j}.
\]
where $V_{\mathrm{u},j}$ denotes the set of user node in cluster $j$.
We require that each ESP node belongs to the cluster itself, i.e. $v_{e,j}\in V_j$ for all \(j\). This allows cluster label and its representative coincide, thereby simplifying the constructions of the hierarchical encoder and decoder in Sections \ref{coding scheme section}--\ref{Section encoding procedure}.

The structure induces the node-cluster association for any node $v\in V$, given by
\begin{equation}
    \mu: V \rightarrow \{1, \dots, n_\mathrm{e}\}, \,
    \mu(v)=j \text{ iff } v \in V_{j},    \label{node-cluster}
\end{equation}
which has a natural two-level hierarchy for addressing. Further, each user node $v_i\in V$ is assigned a unique identifier, referred to as an address, drawn from an address space $\mathcal{A}$. The specific properties and structure of $\mathcal{A}$ will be presented in Section \ref{coding scheme section}.

% In Section~\ref{coding scheme section}, we construct a nested code on the address space $\mathcal{A}$ with the following structure: 
% (i) all ESP nodes $v_{\mathrm{e},j}\in V_\mathrm{e}$ are assigned pairwise prefix-free codewords, and 
% (ii) all nodes in the same cluster $V_j$ share the same prefix (the codeword of $v_{\mathrm{e},j}$) and are distinguished by prefix-free suffixes within that cluster.

%===================================================================================================================
\section{Problem Formulation} \label{problem_formulate}
%====================================================

In the following, we aim to develop an addressing scheme that simultaneously satisfies the required quantum properties and the constraints imposed at the network level. From a quantum perspective, the addressing mechanism must support a representation that can be embedded into quantum states and coherently manipulated by quantum operations. This requirement arises because address information may accompany quantum data or entanglement in various quantum‑native network functionalities and, therefore, must remain compatible with the laws of quantum mechanics. 

At the same time, the addressing scheme must remain efficient at the network level. In particular, the address representation should be compact to avoid excessive overhead, while also supporting hierarchical identification and accommodating heterogeneous cluster sizes. In some cases, only the cluster identity may be needed, rather than the individual identity. Moreover, because quantum network topologies may evolve as entanglement links are created, consumed, or reconfigured, address assignment must be configurable and remain valid under dynamic conditions.
Hence, scalability and flexibility are essential properties of the addressing scheme.

Specifically, we seek to construct an injective addressing function
\begin{equation}
    f: V \rightarrow \mathcal{A},
\end{equation}
where $\mathcal{A}$ denotes the quantum address space. The injectivity of $f$
ensures that each node in the network can be uniquely identified from its associated quantum address.
To satisfy the network constraint, it is required that for each node $v_i \in V$, the corresponding quantum address $ \ket{c_i} \in \mathcal{A}$ can be expressed as
\begin{align}
    \ket{c_i}=\ket{\kappa_j \sigma_{ij}}, \label{hier_adrress_structure}
\end{align}
where $\ket{\kappa_j}$ denotes the prefix address for identifying the $j$-th cluster and $\ket{\sigma_{ij}}$ denotes the suffix address for identifying the $i$-th user node in the $j$-th cluster. 
Taking into account the heterogeneity in cluster sizes, the lengths of the prefix $\ell(\kappa_j)$and the length of the suffix $\ell(\sigma_{ij})$ are both treated as variable parameters. 

%===================================================================================================================
\section{Hierarchical Losslees Coding}\label{coding scheme section}
%===================================================================================================================
    To tackle the problem of addressing, we propose to design a scheme based on a lossless source coding approach. We model the node set $V$ as a discrete information source, which is described by an ensemble $V =\{p, \mathcal{V}\}$. Thereby, each node $v_i \in V$, indexed by \(i = 1, \dots, n\), is represented by a quantum message $\ket{v_i}\in \mathcal{V}$ that is prepared by the source with probability \(p(v_i)>0,\) for all $v_i \in V$. In practical scenarios, the probability \(p(v_i)\) can be specified, for example, based on empirical knowledge or on measurement data from the traffic observed at node \(v_i\). Since each node corresponds to a distinct physical entity, the specific source message $\ket{v_i}$ prepared for node $v_i$ is known a priori. A case that is referred to as visible coding \cite{HowardBarnum_2001}.

This source message $\ket{v_i}$ is then encoded into a codeword $\ket{c_i}$ by a lossless code $c$.
In this manner, each node \(v_i \in V\), can be assigned a distinct codeword $\ket{c_i}=\ket{c(v_i)} \in \mathcal{A}$ that represents its address, and the resulting set of codewords is uniquely decodable \footnote{The source message $\ket{v_i}$ attached to node $v_i$ can be seen as a MAC address, whereas the codeword $\ket{c_i(v_i)}$ plays the role of an IP address in a classical network.}. In a quantum setting, this assignment must be implemented as an isometric linear map satisfying
\begin{equation}
    \langle v|v'\rangle= \langle c(v)| c(v')\rangle, \text{ for all } v, v' \in V,
\end{equation}
which ensures that the encoding is physically realizable \cite{Nielsen_Chuang_2010, Wilde_2017}. To this end, we use the quantum lossless coding framework introduced by Boström \cite{PhysRevA.65.032313}, adapting and extending it to suit the specific requirements of our setting.

\subsection{Quantum Message Space}
To represent the source message $\ket{v_i}\in \mathcal{V}$ and construct the corresponding codewords $\ket{c_i}\in \mathcal{A}$, we work with a two-dimensional symbol space, that is, a Hilbert space $\mathcal{H}$ with the computational basis
\begin{equation}
    \mathcal{B} = \{\ket{x}: x \in\{0,1\}\}. \label{comp-basis}
\end{equation}
Any quantum message $\ket{v}$ of length $\ell$ has then a form of quantum bit string, which is composed of quantum bits via tensor multiplication
\begin{equation}
    \ket{v}=\ket{x_1 x_2 \dots x_\ell}:=\bigotimes_{k=1}^\ell\ket{x_k}, 
\end{equation}
where $\ket{x_k} \in \mathcal{B}$.
Such a quantum message lives in $\ell$-fold tensor product of Hilbert space
\begin{equation}
    \mathcal{H}^{\otimes \ell} := \bigotimes_{k=1}^\ell \mathcal{H} = \operatorname{span}(\mathcal{B}^\ell),
\end{equation}
equipped with the message basis
\begin{equation}
    \mathcal{B}^\ell = \left\{ \ket{\omega} := \bigotimes_{k=1}^\ell \ket{\omega_k} : \ket{\omega_k} \in \mathcal{B} \right\}.
\end{equation}
To support variable-length codewords, we need a larger space, for which \cite{PhysRevA.65.032313} recommends using the Fock space.

For practical deployment, we can restrict the Fock space to a bounded message space
\begin{equation}
    \mathcal{H}^{\oplus \ell_\mathrm{max}}:=\bigoplus_{\ell=0}^{\ell_\mathrm{max}} \mathcal{H}^{\otimes \ell},
\end{equation}
that is, the direct sum of all block message spaces up to length $\ell_\mathrm{max}$. In our setting, this space serves as the bounded quantum bitstring space. For our purpose, we therefore choose our address space $\mathcal{A}$ as a subspace of $\mathcal{H}^{\oplus \ell_\mathrm{max}}$.

\subsection{Encoder}
Let $\mathcal{V}\subset \mathcal{H}^{\otimes r}$ be the source message space of fixed length $r$, spanned by the message basis $\mathcal{B}_V^r$. 
In accordance with the scheme proposed by Boström \cite{PhysRevA.65.032313}, we define a lossless quantum code as an isometric embedding
\begin{equation}
    c: \mathcal{V} \longrightarrow \mathcal{A} \subset \mathcal{H}^{\oplus \ell_\mathrm{max}}, \label{general_code}
\end{equation}
which associates each source message $\ket{v}$ with a distinct address codeword $\ket{c(v)}$ such that
\begin{equation}
    \ket{c(v)}\neq\ket{c(v')} \quad \text{whenever} \quad \ket{v}\neq \ket{v'}. \label{injectivity}
\end{equation}
This encoding is realized by the isometric encoding operator
\begin{equation}
    C:= \sum_{\omega\in \mathcal{B}_V^r}\ket{c(\omega)}\bra{\omega}. \label{general_encoder}
\end{equation}
Recall that the operator $C$ from $\mathcal{V}$ to $\mathcal{A}$ is isometry if $C^{\dagger}C=I_{\mathcal{V}}$ and $CC^{\dagger}=\Pi_\mathcal{A}$, where $\Pi_\mathcal{A}$ is the projector to its range $\mathcal{A}$ \cite{Wilde_2017}.
Therefore, $C$ maps the orthonormal basis $\mathcal{B}_V^r$ of source messages onto a corresponding set of mutually orthogonal codewords in the address subspace $\mathcal{A}$.
% This orthonormal basis $\mathcal{B}_V^r$ can be constructed from the most probable source messages in the network. 
Although the encoder is built on basis $\mathcal{B}_V^r$, linearity ensures the codeword address of any node $v$ is obtained by applying the encoder to the source message for that node, given by
\begin{equation}
    \ket{c(v)} = C\ket{v}, \quad \text{for all } \ket{v} \in \mathcal{V}.
\end{equation}
Consequently, once the encoder has been specified, it can be used to perform lossless encoding of any source message in $\mathcal{V}$ of an arbitrary node $v \in V$.\footnote{This mechanism enables dynamic address assignment for a collection of nodes in a manner analogous to the Dynamic Host Configuration Protocol (DHCP) in classical communication networks.}

However, to accommodate the hierarchical addressing structure defined in (\ref{hier_adrress_structure}), the encoder (\ref{general_encoder}) must be decomposed into encoders that together form a hierarchical structure. To that end, we require a hierarchical code $\hat{c}$ so that the address codeword can be expressed as
\begin{equation}
    \ket{\hat{c}(v)} = \ket{\kappa(v) \sigma(v)},
\end{equation}
where $\kappa$ is a prefix code and $\sigma$ is a suffix code. Additionally, it is essential to ensure that the hierarchical encoder maintains the isometric property.
By recognizing that the source message space can be decomposed as 
\begin{equation}
    \mathcal{V} = \bigoplus_{j=0}^{n_e} \mathcal{V}_j, \quad \mathcal{V}_j := \mathrm{span}\{\ket{\omega} : \omega \in V_{j}\}, \label{SourceSpace_decomposition}
\end{equation}
we may characterize the prefix and suffix codes as follows.

\subsubsection{Prefix Code }
The prefix code $\kappa$ assigns the source message of any node $v\in V$ to a prefix address given by
\begin{equation}
    \kappa: \mathcal{V}_\mathrm{e} \longrightarrow \mathcal{A}_\mathrm{e} \subset \mathcal{H}^{\oplus \ell_\mathrm{max}},
\end{equation}
where $\mathcal{V}_\mathrm{e}\subset\mathcal{V}$ denotes the source message space associated with the ESP set $V_\mathrm{e}$, and $\mathcal{A}_\mathrm{e}$ is the code space of the prefix addresses. 
Our objective is to employ this code in such a way that all nodes belonging to the same cluster $V_j$ are assigned an identical prefix $\ket{\kappa_j}$, while nodes in distinct clusters are assigned mutually distinct prefixes. 
This prefix code $\kappa$ induces an isometric prefix-encoding operator
\begin{equation}
    {K}:= \sum_{\omega \in \mathcal{B}_{V_\mathrm{e}}} \ket{\kappa(\omega)}\bra{\omega}, \label{Prefix_Encoder_Operator}
\end{equation}
which maps basis message $\ket{\omega}$ to its cluster prefix. Thus, we can express
\begin{equation}
\ket{\kappa_{\mu(v)}}=\ket{\kappa(v)}={K}\ket{v},
\end{equation}
where $\mu$ denotes the node–cluster mapping defined in (\ref{node-cluster}), which returns the index $j$ whenever the node $v$ belongs to the cluster $V_j$. By assuming $\ket{v}\in\mathcal{V_\mathrm{e}}$ is normalized, we have $\Vert K\ket{v} \Vert = \Vert \ket{\kappa(v)}\Vert=1$.

\subsubsection{Suffix Code}
We now turn to the characterization of the suffix code $\sigma$, which provides an intra-cluster identifier that distinguishes nodes within the same cluster.  
For each cluster $V_j$, we introduce a cluster-specific suffix code
\begin{equation}
    \sigma_j : \mathcal{V}_j \longrightarrow \mathcal{A}_j \subset \mathcal{H}^{\oplus \ell_{j}},
\end{equation}
where $\mathcal{A}_j$ is the code space of the the suffix address in cluster $j$ and $\ell_{j}$ denotes the maximum length of suffix address in cluster $j$. The suffix mapping is defined as
\begin{equation}
\sigma_j(v) :=
\begin{cases}
\sigma_{ij}, & v \in V_{\mathrm{u},j},\\
\varnothing, & v = v_{\mathrm{e},j},
\end{cases} \label{suffix assgnment}
\end{equation}
such that a suffix address $\sigma_{ij}$ is assigned whenever the node $v$ belongs to cluster $j$ and is not the ESP node, whereas no address is assigned when $v$ is the ESP node $v_{\mathrm{e},j}$. Here, equation (\ref{suffix assgnment}) defines the abstract intra-cluster suffix assignment, which is formulated without imposing any operational constraints.
% In Section V we show that, for a concrete realization in the variable‑length Fock space, either prefixes must be implemented as length eigenstates (prefix‑free) or an explicit framing/delimiter or side channel must be provided; otherwise the empty suffix should be used only for singleton clusters. See Section V for operational details.
% Let $\ell(\sigma_{i,j})$ denote the length of the suffix assigned to node $v_{i} \in V_j$.  
% Then, for each cluster $j$, the intra-cluster Kraft inequality must hold:
% \begin{equation}
%     \sum_{v_i \in V_j} 2^{-\ell(\sigma_{i,j})} \le 1.
%     \label{eq:KraftSuffix}
% \end{equation}
% This condition guarantees that the suffixes within each cluster are uniquely decodable and that the hierarchical concatenation
% \[
% \ket{\kappa(v)}\otimes\ket{\sigma(v)}
% \]
% remains prefix-free at the global level.

This suffix code $\sigma$ induces an isometric suffix-encoding operator
\begin{equation}
    {S_j}
    := \sum_{\omega \in \mathcal{B}_{V_j}} \ket{\sigma_j(\omega)}\bra{\omega}, \label{Suffix_Encoder_Operator}
\end{equation}
such that we can write
\begin{equation}
    \ket{\sigma_j(v)}={S_j}\ket{v}
\end{equation}
for all $\ket{v} \in \mathcal{V}_j$, mapping each node to its intra-cluster suffix address.
By hierarchically composing the prefix and suffix encoders, we obtain the complete hierarchical encoder, formulated as follows.

%==================================================
% Although the source message is fixed and represents the intrinsic identity of node, the assigned codeword may be changed dynamically by redefining the encoding isometry. This enables dynamic addressing while preserving the fixed source message space.
%=========================================
\begin{proposition}[Hierarchical Encoder]\label{proposition encoder}
Let ${K}$ and ${S}_j$ are defined as in (\ref{Prefix_Encoder_Operator}) and (\ref{Suffix_Encoder_Operator}) respectively.
Then, we have a unique isometry map
\[ \hat{c} : \mathcal{V} \to \mathcal{A}_{\mathrm{e}} \otimes\left( \bigoplus_{j=1}^{n_{\mathrm{e}}} \mathcal{A}_j\right)\] 
given by the hierarchical encoder
\[
    \widehat{C} = \bigoplus_{j=1}^{n_\mathrm{e}} \left(\ket{\kappa_j}\bra{\kappa_j}   \,\otimes\, {S}_j\right),
\]
that satisfies
$\widehat{C}\ket{v} = \ket{\hat{c}(v)}=\ket{\kappa_j \sigma_{ij}}$ for all $v \in V$.
\end{proposition}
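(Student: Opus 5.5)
The approach is to use the direct-sum decomposition (\ref{SourceSpace_decomposition}) of $\mathcal{V}$ into the mutually orthogonal cluster subspaces $\mathcal{V}_j$. We define $\widehat{C}$ blockwise and verify three things in turn: that it is well defined on each block and acts correctly on basis messages, that it is an isometry, and that it is unique.

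\textbf{Blockwise definition and action.} Write $\Pi_{\mathcal{V}_j}$ for the projector onto $\mathcal{V}_j$. Interpret the $j$-th summand $\ket{\kappa_j}\bra{\kappa_j}\otimes S_j$ as the operator $\ket{\psi}\mapsto \ket{\kappa_j}\otimes S_j\ket{\psi}$ on $\mathcal{V}_j$, extended by zero on $\mathcal{V}_j^\perp$. In other words, the projector $\ket{\kappa_j}\bra{\kappa_j}$ acts on the prefix register prepared by $K$, which sends $\ket{v}$ to $\ket{\kappa_{\mu(v)}}$, and fixes it. Then $\widehat{C}=\sum_j (\ket{\kappa_j}\otimes S_j)\Pi_{\mathcal{V}_j}$. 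For a basis message $\omega\in V_j$, only the $j$-th term survives, giving
\[
\widehat{C}\ket{\omega}=\ket{\kappa_j}\otimes\ket{\sigma_j(\omega)}=\ket{\kappa_j\sigma_{ij}}.
\]
For the ESP node the suffix is the empty (vacuum, length-$0$) state, so the result is $\ket{\kappa_j}$. Extending by linearity gives $\widehat{C}\ket{v}=\ket{\hat c(v)}$ for all $v\in V$.

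\textbf{Isometry.} Compute $\widehat{C}^\dagger\widehat{C}=\sum_{j,j'}\Pi_{\mathcal{V}_{j'}}(\bra{\kappa_{j'}}\otimes S_{j'}^\dagger)(\ket{\kappa_j}\otimes S_j)\Pi_{\mathcal{V}_j}$. The cross terms $j\neq j'$ vanish for two independent reasons: $\langle\kappa_{j'}|\kappa_j\rangle=0$, since $K$ is an isometry and distinct ESP messages $\ket{v_{\mathrm{e},j}}$ are orthonormal; and also $\Pi_{\mathcal{V}_{j'}}\cdots\Pi_{\mathcal{V}_j}$ with disjoint supports. The diagonal terms give $\langle\kappa_j|\kappa_j\rangle\, S_j^\dagger S_j\,\Pi_{\mathcal{V}_j}=\Pi_{\mathcal{V}_j}$, using $\|\kappa_j\|=1$ and $S_j^\dagger S_j=I_{\mathcal{V}_j}$. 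Summing over $j$ yields $\widehat{C}^\dagger\widehat{C}=\sum_j\Pi_{\mathcal{V}_j}=I_{\mathcal{V}}$.

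Equivalently, the inner products between basis codewords are
\[
\langle\kappa_{j'}\sigma_{i'j'}|\kappa_j\sigma_{ij}\rangle=\delta_{jj'}\,\langle\sigma_{i'j}|\sigma_{ij}\rangle=\delta_{jj'}\delta_{ii'}.
\]
This shows the codewords are orthonormal, which also yields injectivity (\ref{injectivity}). The range is $\bigoplus_j \ket{\kappa_j}\otimes\mathcal{A}_j$, which is contained in $\mathcal{A}_\mathrm{e}\otimes(\bigoplus_j\mathcal{A}_j)$. Therefore $\widehat{C}\widehat{C}^\dagger$ is the projector onto this range.

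\textbf{Uniqueness.} A linear map on $\mathcal{V}$ is determined by its values on the orthonormal basis $\bigcup_j\mathcal{B}_{V_j}$. Any isometry $\hat c$ satisfying $\hat c(\omega)=\ket{\kappa_j\sigma_{ij}}$ on this basis must therefore coincide with $\widehat{C}$.

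\textbf{Main obstacle.} The hardest step is making the tensor/direct-sum bookkeeping rigorous. First, the notation $\ket{\kappa_j}\bra{\kappa_j}\otimes S_j$ has a domain of $\mathcal{A}_\mathrm{e}\otimes\mathcal{V}_j$ rather than $\mathcal{V}_j$, so we must fix the intended reading $\ket{\kappa_j}\otimes S_j$, equivalently $(\ket{\kappa_j}\bra{\kappa_j}\otimes S_j)(K\otimes I)$ applied to a copied cluster label. Second, the empty suffix $\varnothing$ of the ESP must be treated as the normalized vacuum of $\mathcal{H}^{\otimes 0}$, so that $S_j$ remains an isometry and its codeword is orthogonal to the other suffixes of the cluster. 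The latter orthogonality must hold automatically because the vacuum lies in a different length sector of the direct sum. Once these conventions are fixed, the remaining verification is the routine computation above.
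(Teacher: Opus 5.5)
Your argument is correct and follows the paper's route. You build a local encoder $\ket{\kappa_j}\otimes S_j$ on each $\mathcal{V}_j$ and glue these along the orthogonal decomposition (\ref{SourceSpace_decomposition}). You also make three things explicit that the paper only asserts when it says $\widehat{C}$ ``must be block-diagonal'':
\begin{itemize}
\item you read the summand as $\ket{\kappa_j}\otimes S_j$, since the literal $\ket{\kappa_j}\bra{\kappa_j}\otimes S_j$ is an isometry only after restriction to $\ket{\kappa_j}\otimes\mathcal{V}_j$;
\item you compute $\widehat{C}^\dagger\widehat{C}$ directly;
\item you prove uniqueness from the fact that a linear map is fixed by its values on a basis.
\end{itemize}

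One sentence in your isometry step is false and should be deleted. The cross terms do not vanish ``also'' because of the projectors $\Pi_{\mathcal{V}_{j'}}$ and $\Pi_{\mathcal{V}_j}$. For $j\neq j'$ the cross term is $\langle\kappa_{j'}|\kappa_j\rangle\,\Pi_{\mathcal{V}_{j'}}S_{j'}^\dagger S_j\Pi_{\mathcal{V}_j}$. Since $S_{j'}^\dagger S_j$ already maps $\mathcal{V}_j$ into $\mathcal{V}_{j'}$, the outer projectors act trivially on it. That operator vanishes only if the suffix ranges $\mathcal{A}_j$ and $\mathcal{A}_{j'}$ are orthogonal, and the source-side projectors say nothing about that.

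In the intended use, suffix codebooks are shared across clusters. In Section~\ref{numerical_example}, $S_1$ and $S_2$ map onto the same codewords $\ket{0},\ket{01},\ket{000},\ket{001}$, so $S_2^\dagger S_1=\sum_k\ket{\varphi_{2,k}}\bra{\varphi_{1,k}}\neq 0$. Hence $\langle\kappa_{j'}|\kappa_j\rangle=\delta_{jj'}$ is the \emph{only} mechanism separating clusters. You derive it correctly: $K$ is an isometry and the ESP messages are orthonormal, since they lie in the mutually orthogonal $\mathcal{V}_j$. It is also exactly what the paper uses implicitly when it treats $\widehat{\mathcal{A}}=\bigoplus_j\widehat{\mathcal{A}}_j$ as an orthogonal decomposition. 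With that sentence removed, your proof is complete.
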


\begin{proof}
Let fix an arbitrary cluster $j$. By construction, for any $j$ and $v \in V_j$ we want to have the codeword 
\[
    \ket{\hat{c}(v)}  = \ket{\kappa_j \sigma_j(v)}
                =\ket{\kappa_j} \otimes \ket{\sigma_j(v)}.
\]
On the subspace $\mathcal{V}_j$, the suffix is implemented by any isometry ${S}_j$, such that
\[
   \ket{\hat{c}(v)} =   \ket{\kappa_j} \otimes {S}_j \ket{v},
    \qquad v \in V_j,
\]
Thus, on $\mathcal{V}_j$, a local encoder that implements this code is given by
\[\widehat{C}_{j}=\ket{\kappa_j}\bra{\kappa_j} \otimes S_j.\]
Since $K$ and $S_j$ are isometries and $\ket{\kappa_j}$ is normalized, it follows that $\widehat C_j$ is an isometry on $\mathcal{V}_j$.

% =======================================
% To verify the isometry property of $\widehat C_j$ on $\mathcal {V}_j$. let for any $\ket{x},\ket{y}\in\mathcal {V}_j$, \[ \langle\widehat C_j x,\widehat C_j y\rangle =\langle\kappa_j|\kappa_j\rangle;\langle S_j x,S_j y\rangle =1\cdot\langle x,y\rangle, \] since $\langle\kappa_j|\kappa_j\rangle=1$ and $S_j^\dagger S_j=I_{\mathcal V_j}$. Equivalently, \[ \widehat C_j^\dagger\widehat C_j =\bigl(\ket{\kappa_j}\bra{\kappa_j}\bigr)\otimes (S_j^\dagger S_j) =\ket{\kappa_j}\bra{\kappa_j}\otimes I_{\mathcal V_j}, \] which acts as the identity on the encoded copy $\ket{\kappa_j}\otimes\mathcal {V}_j$. Therefore $\widehat C_j$ is an isometry on $\mathcal{V}_j$.
% =======================================
Next, the source space decomposes as a direct sum of orthogonal subspaces as in (\ref{SourceSpace_decomposition}),
%\[\mathcal{V} = \bigoplus_{j=1}^{n_\mathrm{e}} \mathcal{V}_j,\]
which in turn induces the following decomposition of the address space:
\[
\widehat{\mathcal{A}} 
= \bigoplus_{j=1}^{n_\mathrm{e}} \widehat{\mathcal{A}}_j, 
\quad \text{with} \quad
\widehat{\mathcal{A}}_j := \ket{\kappa_j} \otimes \mathcal{A}_j.
\]
Consequently, the global hierarchical encoder must be block-diagonal with respect to this decomposition, and hence it can be written as
\[
    \widehat{C}
    =
    \bigoplus_{j=1}^{n_\mathrm{e}}
    \left(
        \ket{\kappa_j}\bra{\kappa_j}
        \otimes
        S_j
    \right).
\]
Further, since each ${S}_j$ is an isometry and $\ket{\kappa_j}\bra{\kappa_j}$ is an orthogonal projector, their direct sum defines an isometry. By construction this operator satisfies $\widehat{C}\ket{v} = \ket{\hat{c}(v)}$ for all $v \in V$.
\end{proof}

%==========================================
\subsection{Decoder and Unique Decodability}
%======================================================================
Since the hierarchical encoder $\widehat{C}$ is an isometry from 
$\mathcal{V}$ into $\widehat{\mathcal{A}} \subset \mathcal{H}^{\oplus \ell_{\max}}$, its inverse is well defined on its range. Within the framework of quantum information theory, the inverse of an isometric encoding is given by the adjoint operator restricted to the corresponding code subspace \cite{Wilde_2017}. 
Thus, the hierarchical decoder of $\widehat{C}$ can be formulated as follows.
%======================================================================
\begin{proposition}[Hierarchical Decoder]\label{proposition decoder}
Let $\widehat{C} : \mathcal{V} \to \widehat{\mathcal{A}}$ be the hierarchical encoder of Proposition~\ref{proposition encoder}, with
\[
  \mathcal{V} = \bigoplus_{j=1}^{n_{\mathrm{e}}} \mathcal{V}_j,
  \qquad
  \widehat{\mathcal{A}}
    = \bigoplus_{j=1}^{n_{\mathrm{e}}} \widehat{\mathcal{A}}_j,
  \qquad
  \widehat{\mathcal{A}}_j := \ket{\kappa_j} \otimes \mathcal{A}_j.
\]
For each cluster $j$, define the local decoder
\begin{align*}
\widehat{D}_j: \widehat{\mathcal{A}}_j \to \mathcal{V}_j, \qquad 
  \widehat{D}_j
    := \ket{\kappa_j}\bra{\kappa_j} \otimes {S}_j^\dagger    
\end{align*}
Then the global hierarchical decoder
\[
  \widehat{D} : \widehat{\mathcal{A}} \to \mathcal{V}, \qquad
    \widehat{D} := \bigoplus_{j=1}^{n_{\mathrm{e}}} \widehat{D}_j
    = \widehat{C}^\dagger    
\]
satisfies $\widehat{D}\,\widehat{C} = I_{\mathcal{V}},$
and the hierarchical code is uniquely decodable both locally and globally.
\end{proposition}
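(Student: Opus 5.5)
The plan is to reduce everything to the block-diagonal structure established in Proposition~\ref{proposition encoder}, and then verify the identity $\widehat{D}\,\widehat{C} = I_{\mathcal{V}}$ cluster by cluster.

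\textbf{Step 1: each block is the adjoint of the corresponding encoder block.} Since $(A\otimes B)^\dagger = A^\dagger \otimes B^\dagger$ and $\ket{\kappa_j}\bra{\kappa_j}$ is self-adjoint, we have
\[
\widehat{C}_j^\dagger = \ket{\kappa_j}\bra{\kappa_j}\otimes S_j^\dagger = \widehat{D}_j .
\]
Taking adjoints of a direct sum blockwise then gives $\widehat{D} = \bigoplus_j \widehat{D}_j = \widehat{C}^\dagger$.

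\textbf{Step 2: the local decoder inverts the local encoder.} For $v\in V_j$ we compute
\[
\widehat{D}_j \widehat{C}_j \ket{v} = \ket{\kappa_j}\langle\kappa_j|\kappa_j\rangle \otimes S_j^\dagger S_j \ket{v}.
\]
Using $\langle\kappa_j|\kappa_j\rangle = 1$ and $S_j^\dagger S_j = I_{\mathcal{V}_j}$, this returns the source message. Here the prefix factor $\ket{\kappa_j}$ must be understood as the cluster tag that the decoder strips off. Concretely, I would read $\widehat{D}_j$ as $\bra{\kappa_j}\otimes S_j^\dagger$ acting on $\widehat{\mathcal{A}}_j = \ket{\kappa_j}\otimes\mathcal{A}_j$, so that the target really is $\mathcal{V}_j$. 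I will state this identification explicitly, in the same convention the paper used to write $\widehat{C}_j$, under which the source vector $\ket{v}$ is implicitly identified with $\ket{\kappa_j}\otimes\ket{v}$.

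\textbf{Step 3: orthogonality of the blocks.} Distinct clusters must have orthogonal prefixes, $\langle\kappa_j|\kappa_{j'}\rangle = \delta_{jj'}$. This holds because $K$ is an isometry mapping the distinct ESP messages $\ket{v_{\mathrm{e},j}}$, which are orthonormal, to $\ket{\kappa_j}$. It follows that the cross terms $\widehat{D}_{j'}\widehat{C}_j$ vanish for $j\neq j'$. Summing over the decomposition (\ref{SourceSpace_decomposition}) then gives
\[
\widehat{D}\,\widehat{C} = \bigoplus_j I_{\mathcal{V}_j} = I_{\mathcal{V}} .
\]

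\textbf{Step 4: unique decodability.} This is a corollary of Step 3. A left inverse implies that $\widehat{C}$ is injective, so distinct source messages give distinct codewords, as in (\ref{injectivity}). Locally, injectivity of $S_j$ (from $S_j^\dagger S_j=I$) yields unique intra-cluster decoding. Globally, orthogonality of the prefixes means that $\Pi_j = \ket{\kappa_j}\bra{\kappa_j}\otimes I$ is a coherent, measurement-free identification of the cluster component, and each component is then decoded by its local decoder.

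\textbf{Main obstacle.} I expect the hardest part to be making the tensor/direct-sum bookkeeping in Step 2 rigorous, since the literal operator $\ket{\kappa_j}\bra{\kappa_j}\otimes S_j$ has domain $\mathcal{H}\otimes\mathcal{V}_j$ rather than $\mathcal{V}_j$. I would handle this by fixing the embedding $\ket{v}\mapsto\ket{\kappa_j}\otimes\ket{v}$ once at the outset and checking that it is consistent with Proposition~\ref{proposition encoder}.
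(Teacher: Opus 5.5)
Your proposal is correct and follows the paper's proof. The paper computes $\widehat{D}_j\widehat{C}_j=\ket{\kappa_j}\bra{\kappa_j}\otimes S_j^\dagger S_j$, which is the identity on the cluster block, then assembles the blocks over $\mathcal{V}=\bigoplus_j\mathcal{V}_j$ to get $\widehat{D}\,\widehat{C}=I_{\mathcal{V}}$ and deduces injectivity from the left inverse. Your adjoint check and your explicit embedding $\ket{v}\mapsto\ket{\kappa_j}\otimes\ket{v}$ simply make the paper's ``upon restriction to $\mathcal{V}_j$'' rigorous. One small point: the block orthogonality you use in Step 3 is already part of the hypothesis $\widehat{\mathcal{A}}=\bigoplus_j\widehat{\mathcal{A}}_j$, so you need not assert that the ESP messages are orthonormal, which the paper does not assume explicitly (its numerical example even uses three ESP states in a two-dimensional space).
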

%======================================================================
\begin{proof}
Fix a cluster $j$. On $\mathcal{V}_j$, the encoder of Proposition~\ref{proposition encoder} is given by
\[
  \widehat{C}_j: \mathcal{V}_j \to \widehat{\mathcal{A}}_j \qquad
    \widehat{C}_j := \ket{\kappa_j}\bra{\kappa_j} \otimes S_j,
\]
where $S_j$ is an isometry. Consequently,
\begin{align*}
    \widehat{D}_j \widehat{C}_j
    &= (\ket{\kappa_j}\bra{\kappa_j} \otimes S_j^\dagger)
      (\ket{\kappa_j}\bra{\kappa_j} \otimes S_j)\\
    &= (\ket{\kappa_j}\bra{\kappa_j})^2 \otimes S_j^\dagger S_j\\
    &= \ket{\kappa_j}\bra{\kappa_j} \otimes I_{\mathcal{V}_j}.
\end{align*}
Therefore, upon restriction to the subspace $\mathcal{V}_j$ we obtain
\[
  \widehat{D}_j \widehat{C}_j = I_{\mathcal{V}_j}.
\]
In particular, for every $\ket{v} \in \mathcal{V}_j$ it follows that
\[
  \widehat{D}_j \widehat{C}_j \ket{v} = \ket{v},
\]
which demonstrates the decodablity on each cluster subspace $\mathcal{V}_j$.
Using the direct-sum decompositions
\[
  \mathcal{V} = \bigoplus_j \mathcal{V}_j,
  \qquad
  \widehat{\mathcal{A}} = \bigoplus_j \widehat{\mathcal{A}}_j,
\]
the global encoder and decoder decompose as
\[
  \widehat{C} = \bigoplus_j \widehat{C}_j,
  \qquad
  \widehat{D} = \bigoplus_j \widehat{D}_j.
\]
Thus,\[
  \widehat{D}\,\widehat{C}
    = \bigoplus_j \widehat{D}_j \widehat{C}_j
    = \bigoplus_j I_{\mathcal{V}_j}
    = I_{\mathcal{V}},
\]
which proves global decodability and shows that
$\widehat{D} = \widehat{C}^\dagger$ is a left inverse of
$\widehat{C}$ on $\mathcal{V}$. 
In particular, if $\widehat{C}\ket{v} = \widehat{C}\ket{v'}$, then
applying $\widehat{D}$ yields
\[
  \ket{v}
    = \widehat{D}\,\widehat{C}\ket{v}
    = \widehat{D}\,\widehat{C}\ket{v'}
    = \ket{v'},
\]
so the encoder is injective, and hence the hierarchical code is uniquely
decodable. 
\end{proof}
% It is to note that, even if two suffixes coincide in different clusters, the corresponding global codewords lie in orthogonal subspaces
% $\widehat{\mathcal{A}}_j \perp \widehat{\mathcal{A}}_{j'}$ for
% $j \neq j'$, due to the orthogonality of the prefix subspace $\mathcal{A}_{\mathrm{e}}$, and therefore cannot collide.

\begin{figure}
    \centering
   \includegraphics[width=\columnwidth, height=0.3\textheight, keepaspectratio]{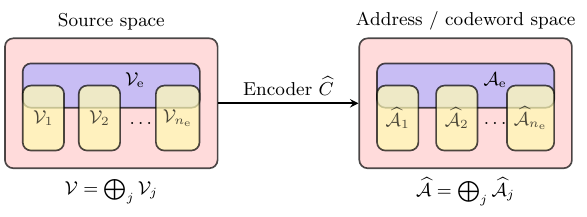}
    \vspace{-0.5cm}
    \caption{Hierarchical decomposition of the source space and its image under isometric encoder $\widehat{C}$, which maps each subspace $\mathcal{V}_j$ into mutually orthogonal address subspace $\widehat{\mathcal{A}}_j$}.
    \label{fig:encoder_mapping}
\end{figure}

These encoder and decoder appear to resemble the Ahlswede and Cai scheme \cite{ahlswede2003lossless} in that the source space is partitioned into subspaces, and each equipped with its own local isometry. In that scheme, the information regarding the subspace to which the source message belongs must be transmitted through a classical side channel. Our construction differs by embedding this information directly into the prefix of codeword address, as illustrated in Fig~\ref{fig:encoder_mapping}. 
In addition, the hierarchical encoder and decoder described above naturally extends to multiple layers. The suffix code of one layer may be treated as the prefix code of the next, inducing a further direct‑sum decomposition of the corresponding code subspaces. This recursive construction leads to a multi‑layer hierarchical code, but a full treatment lies beyond the scope of the present paper.

% %======================================================================
% \subsection{Address Average Length Bound}
% It is reasonable to assume that each cluster and each node has distinct traffic statistics, from which we can estimate the probability that each sends a packet. We can then assign short suffix address for node inside cluster that sends packet with high probability and assign longer address for node with low probability. The same we can also do with the prefix address, where probability now corresponds to the size of the cluster.
% -probability distribution $p: V \rightarrow [0, 1]$.
% -Altogether, we have the probability $p(i,j)=p(i|j) p(j)$.
% -Is the lower bound < $\text{log}_2|V|$?.
% -Compare \cite{ahlswede2003lossless}

%======================================================================
 \section{Encoding Procedure} \label{Section encoding procedure}
 %======================================================================
Having presented the construction of the hierarchical encoder and decoder, we now proceed to describe a procedure that we propose for their implementation. The procedure is enabled by considering the following key observations.
\begin{observation}[Visible coding]\label{observation visible coding}
Since each node is a distinct physical entity, we know which node $v\in V$ corresponds to $\ket{v}\in\mathcal{V}$ when preparing the source message. Therefore, the encoder operates in the visible coding regime, where the output state of the source message to be encoded is known a priori \cite{HowardBarnum_2001, PhysRevA.65.032313}.
\end{observation}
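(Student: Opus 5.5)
Since this is an observation rather than a quantitative claim, I plan to prove it by checking it against the formal definition of visible coding in \cite{HowardBarnum_2001, PhysRevA.65.032313}. In the \emph{blind} regime the encoder receives only the quantum state $\ket{v}$ drawn from the ensemble $\{p,\mathcal{V}\}$. In the \emph{visible} regime it additionally receives, or may be conditioned on, the classical label $i$ identifying which member $\ket{v_i}$ of the ensemble was prepared. So I will show that the label $i$ is available to the encoder, and that the label-conditioned encoding agrees with the isometric encoder $\widehat{C}$ of Proposition~\ref{proposition encoder}.

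\textbf{Step 1: the label is available.} I will use the system model. Each $v_i\in V$ is a distinct physical entity, and it is assigned a fixed source message $\ket{v_i}\in\mathcal{B}_V^r$ by a deterministic, pre-agreed rule (the ``MAC-address'' role in the footnote). The map $i\mapsto\ket{v_i}$ is therefore injective and known before any state is prepared. Preparation at node $v_i$ is performed by that node itself, so the preparer holds $i$ as classical side information. This is exactly the visible-coding hypothesis.

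\textbf{Step 2: the definitions agree.} A visible encoder is a map $i\mapsto\rho_i$ from labels to code states. I will take $\rho_i=\widehat{C}\ket{v_i}\bra{v_i}\widehat{C}^\dagger=\ket{\kappa_j\sigma_{ij}}\bra{\kappa_j\sigma_{ij}}$ with $j=\mu(v_i)$. By Proposition~\ref{proposition encoder}, this coincides on every ensemble member with the blind isometric encoding. Hence the encoder may be implemented either as the isometry $\widehat{C}$ or by directly preparing $\ket{\kappa_{\mu(v_i)}\sigma_{ij}}$ from the classical index.

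\textbf{Step 3: consequences.} Because the messages are computational-basis states, knowing $i$ is equivalent to knowing the state. No measurement of, or disturbance to, unknown quantum data is needed, so the no-cloning and measurement restrictions of Section~\ref{problem_formulate} do not obstruct the procedure. Decoding remains coherent via $\widehat{D}=\widehat{C}^\dagger$ (Proposition~\ref{proposition decoder}).

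\textbf{Main obstacle.} The delicate point is justifying the step from ``node identity is known'' to ``the state is known.'' This requires that distinct nodes are assigned distinct, mutually orthogonal basis messages, so that the identity-to-state correspondence is a bijection onto $\mathcal{B}_V^r$. I would make this assumption explicit, and note that superposed inputs $\ket{v}\in\mathcal{V}$ (used for dynamic assignment) are still handled by linearity of $\widehat{C}$ rather than by visibility.
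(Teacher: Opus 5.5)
Your Step 1 matches the paper's own justification. Each node is a distinct physical entity with a pre-assigned source message, so whoever prepares $\ket{v_i}$ holds the label $i$, and that is the definition of visible coding in the cited works.

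The gap is in what you add afterwards. In Step 1 you place every message in $\mathcal{B}_V^r$. In Step 3 you assert that the messages are computational-basis states. In your ``main obstacle'' you claim that passing from ``identity known'' to ``state known'' requires distinct, mutually orthogonal basis messages. That necessity claim is false, and the assumption contradicts the paper's setting.

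Knowing $i$ gives you the classical description of $\ket{v_i}$ because the ensemble $\{p(v_i),\ket{v_i}\}$ and the assignment $i\mapsto\ket{v_i}$ are fixed and known in advance. This needs only that the assignment is a function, not a bijection onto an orthonormal basis. Orthogonality would matter only for the reverse direction, recovering $i$ from the state by measurement, and visible coding never requires that.

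In the paper the messages are generically non-orthogonal. The state $\ket{v_{\mathrm{e}3}}=\tfrac{1}{\sqrt{2}}(\ket{0}+\ket{1})$ sits alongside $\ket{0}$ and $\ket{1}$ in a two-dimensional $\mathcal{V}_{\mathrm{e}}$. Cluster~1 has seven messages in a four-dimensional $\mathcal{V}_1$. The basis $\mathcal{B}_V^r$ on which the encoder is built is the eigenbasis of $\rho$ from Procedure~\ref{procedure_hirearchical-coding}, not the set of node messages.

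Your assumption also empties the observation of content. If the messages were orthogonal, a blind encoder could measure in that basis without disturbance and learn $i$, so the blind/visible distinction would collapse. Visibility matters precisely for non-orthogonal ensembles like the paper's. The superposed messages ($v_{\mathrm{e}3}$, $v_{51}$, $v_{61}$, $v_{71}$, $v_{52}$) are genuine ensemble members, not merely ``dynamic-assignment'' inputs handled by linearity.

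The fix is to drop the orthogonality hypothesis and the computational-basis claim. Your Step 1 together with the fixed, known ensemble then suffices for arbitrary messages. Your Step 2 also survives for arbitrary messages, since preparing $\widehat{C}\ket{v_i}$ from $i$ reproduces the Gram matrix $\langle v_i|v_k\rangle$ because $\widehat{C}$ is an isometry.
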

In our setting, visible coding allows the prefix and suffix code, $\kappa$ and $\sigma_j$, to perform one-to-one mapping between the source message and the corresponding prefix and suffix addresses. For such a mapping, we use a prefix-free code in our encoding procedure, motivated by its desirable properties given in the following observation.
\begin{observation}[Prefix-free code]\label{observation prefix free}
A prefix-free code assigns codewords in such a way that no codeword is a prefix of another. When embedded as basis in the quantum bitstring space $\mathcal{H}^{\oplus \ell_{\mathrm{max}}}$, these codewords form a prefix-free set in the sense of Definition~3.1 of
Müller--Rogers~\cite{Mller2008QuantumBS}. In particular, they occupy mutually orthogonal subspaces and therefore span a prefix-free Hilbert subspace.
\end{observation}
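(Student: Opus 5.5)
The claim is that a prefix-free classical code $\{c_1,\dots,c_m\}\subset\{0,1\}^{\le \ell_{\max}}$, embedded as computational-basis states of $\mathcal{H}^{\oplus \ell_{\max}}$, is a prefix-free set in the sense of Müller--Rogers and spans mutually orthogonal subspaces. The plan has two parts. First, show that the basis states $\ket{c_i}$ are pairwise orthonormal in $\mathcal{H}^{\oplus \ell_{\max}}$. Second, show that the Müller--Rogers prefix-free condition, which concerns subspaces rather than individual strings, holds for their span.

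\textbf{Orthogonality.} Each $\ket{c_i}=\ket{x_1\cdots x_{\ell_i}}$ lies in the summand $\mathcal{H}^{\otimes \ell_i}$ and is a length eigenstate. Take $i\neq i'$.
\begin{itemize}
\item If $\ell_i\neq\ell_{i'}$, the two states lie in different direct summands. These summands are orthogonal by definition of the direct sum, so $\langle c_i|c_{i'}\rangle=0$.
\item If $\ell_i=\ell_{i'}$, the strings are distinct elements of $\mathcal{B}^{\ell_i}$. Their inner product factorizes as $\prod_k \langle x_k|x'_k\rangle$, and this vanishes because the strings differ in at least one position and $\mathcal{B}$ in (\ref{comp-basis}) is orthonormal.
\end{itemize}
Hence $\{\ket{c_i}\}$ is an orthonormal set. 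Each one-dimensional span $\operatorname{span}\{\ket{c_i}\}$ is therefore mutually orthogonal to the others, and their direct sum is a well-defined subspace of $\mathcal{H}^{\oplus \ell_{\max}}$.

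\textbf{Prefix-freeness in the Müller--Rogers sense.} Definition~3.1 of \cite{Mller2008QuantumBS} works with the partial-trace / restriction map that truncates a qubit string to its first $k$ qubits. It calls a subspace prefix-free if every pair of its length-eigenstate components satisfies the following: no component is, after such a truncation, equal to another element of the set. For a set of orthogonal basis strings this amounts to the classical condition that no $c_i$ is a proper prefix of some $c_{i'}$.

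To verify this, take $\ell_i<\ell_{i'}$ and truncate $\ket{c_{i'}}$ to its first $\ell_i$ qubits. Because $\ket{c_{i'}}$ is a product state, the truncation yields the pure state $\ket{x'_1\cdots x'_{\ell_i}}$. Classical prefix-freeness says this string differs from $c_i$, so the truncated state is orthogonal to $\ket{c_i}$ by the same product-basis argument as above. The condition therefore holds for basis vectors. Since the definition is stated for the spanning length-eigenstate basis, it also holds for the spanned subspace.

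The main obstacle is matching the exact formulation of Definition~3.1, for instance whether it is phrased through the subspaces $\operatorname{span}\{\ket{c_i}\}$ and their length projectors. If the definition is stated for arbitrary superpositions, I would extend the argument by linearity. Length is then block-diagonal across the codewords, and truncations of distinct codewords land in orthogonal sectors, so no superposition of longer codewords can reduce to a shorter codeword. This is exactly where the length-eigenstate (no superposition across lengths within a single codeword) property is used.
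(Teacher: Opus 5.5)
The paper gives no argument for this observation. It simply records it and defers to Definition~3.1 of M\"uller--Rogers. Your proposal is correct in substance and takes a different, self-contained route by verifying the two claims directly.

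A useful by-product is that you separate those claims properly. Your orthogonality step uses only that the embedded strings are distinct (different direct summands, or distinct product-basis strings of equal length), so it holds for any code, prefix-free or not. Your prefix-freeness step uses the classical prefix condition, which orthogonality alone cannot supply: the orthonormal set $\{\ket{0},\ket{01}\}$ is not prefix-free. The statement's ``in particular \dots\ therefore'' wording blurs exactly this distinction. The paper's citation-only treatment buys brevity but leaves it unexamined. The cost of your route is that it rests on a paraphrase of Definition~3.1 that you could not confirm.

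That cost matters for the subspace-level claim. If, as M\"uller--Rogers emphasize, prefix-freeness is formulated without assuming a basis of length eigenstates, then checking basis vectors and asserting that the property ``also holds for the spanned subspace'' is not enough. The superposition case is the substantive one, and your sketch of it needs repair on two counts:
\begin{itemize}
\item Truncation is a partial trace of $\ket{\psi}\bra{\psi}$, so it is not linear in $\ket{\psi}$, and ``by linearity'' does not apply.
\item It is false that truncations of distinct codewords land in orthogonal sectors: $10$ and $11$ both truncate to $1$.
\end{itemize}

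The right invariant is a support statement. Take $\ket{\psi}$ in the code span and any $k$. The component of $\ket{\psi}$ of length greater than $k$ is a combination of codewords longer than $k$. Its truncation to the first $k$ qubits, including whatever survives of coherences between different lengths, is therefore supported on $\operatorname{span}\{\ket{c_1\cdots c_k}\}$, with $c$ ranging over those longer codewords. By classical prefix-freeness none of these $k$-bit strings is a codeword. Hence every proper truncation of every code state is orthogonal to the whole code space. This is the quantum form of ``no codeword is a proper prefix of another,'' and it is what the definition requires at the level of subspaces.
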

Moreover, prefix-free codewords have an additional structural benefit for our hierarchical
construction, as given in the follwoing observation.
\begin{observation}[Concatenation prefix-free codewords]\label{observation concatenation}
Prefix-free codewords have definite length and are therefore length eigenstates
in $\mathcal{H}^{\oplus \ell_{\mathrm{max}}}$. By Lemma~2.5 of
Müller--Rogers~\cite{Mller2008QuantumBS}, the concatenation of two
variable-length quantum codewords is an isometry whenever the first component
is a length eigenstate.
\end{observation}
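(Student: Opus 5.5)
The statement to prove is Observation~\ref{observation concatenation}. It has two parts: prefix-free basis codewords are length eigenstates, and concatenating a length-eigenstate first component with an arbitrary variable-length second component is an isometry. I will establish both directly in $\mathcal{H}^{\oplus \ell_{\mathrm{max}}}$ rather than only citing Müller--Rogers.

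\textbf{Step 1: length eigenstates.} Define the length observable $\Lambda := \sum_{\ell=0}^{\ell_{\max}} \ell\, \Pi_\ell$, where $\Pi_\ell$ is the projector onto the summand $\mathcal{H}^{\otimes \ell}$. A prefix-free codeword $\ket{\kappa_j}=\ket{x_1\cdots x_{\ell(\kappa_j)}}$ is a single computational basis string. It therefore lies entirely in $\mathcal{H}^{\otimes \ell(\kappa_j)}$, and so $\Lambda\ket{\kappa_j}=\ell(\kappa_j)\ket{\kappa_j}$. This holds for any classical string. The prefix-free property (Observation~\ref{observation prefix free}) is used in the next step, to make the full concatenation map injective.

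\textbf{Step 2: the concatenation map.} Define concatenation on basis strings by $\ket{a}\circ\ket{b} := \ket{ab}$ and extend it bilinearly. For a fixed first argument $\ket{a}$ with $a$ of definite length $m$, the map $\ket{b}\mapsto\ket{ab}$ sends the summand $\mathcal{H}^{\otimes \ell}$ into $\mathcal{H}^{\otimes(m+\ell)}$ via $\ket{b}\mapsto \ket{a}\otimes\ket{b}$. Distinct basis strings $b\neq b'$ give distinct basis strings $ab\neq ab'$, so orthonormal vectors are sent to orthonormal vectors and the map is an isometry. Here it is essential that $\ket{a}$ is a length eigenstate, because then the splitting point $m$ is fixed.

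For the full encoder we need more: the images for different prefixes must also be orthogonal. Suppose $\langle \kappa_j\sigma \mid \kappa_{j'}\sigma'\rangle\neq 0$ with $j\neq j'$. Then the two strings $\kappa_j\sigma$ and $\kappa_{j'}\sigma'$ coincide. In particular their first $\min(\ell(\kappa_j),\ell(\kappa_{j'}))$ symbols agree, which makes one of $\kappa_j,\kappa_{j'}$ a prefix of the other and contradicts prefix-freeness. Hence the subspaces $\ket{\kappa_j}\otimes\mathcal{A}_j$ are mutually orthogonal. Within each block, Step 2 gives an isometry, and the two facts together give an isometry on the direct sum.

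\textbf{Main obstacle.} The delicate point is the case where the first component is \emph{not} a length eigenstate, for example a superposition $\alpha\ket{0}+\beta\ket{01}$. In that case the splitting point is indeterminate, and the inner products can pick up cross terms. For instance, $\ket{0}\circ\ket{1}$ and $\ket{01}\circ\ket{\varnothing}$ both equal $\ket{01}$, which shows that bilinear concatenation is generally not isometric. I will use this counterexample to explain why the hypothesis ``length eigenstate'' is exactly what Lemma~2.5 of Müller--Rogers requires. I will then check that the Huffman prefixes used in Section~\ref{Section encoding procedure} are basis strings, so that Step~1 applies to them.
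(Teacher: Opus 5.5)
\textbf{A different route, but with a gap in Step~2.} The paper gives no argument of its own for this observation; it rests entirely on citing Lemma~2.5 of M\"uller--Rogers. Your direct verification is a reasonable self-contained alternative: the length observable, the fixed splitting point, and the counterexample $\ket{0}\circ\ket{1}=\ket{01}\circ\ket{\varnothing}$ showing the hypothesis cannot be dropped. The gap is that the argument ``distinct $b\neq b'$ give distinct strings $ab\neq ab'$'' only covers a first component that is a single computational-basis string. The statement concerns an arbitrary length eigenstate, i.e.\ any vector of some $\mathcal{H}^{\otimes m}$.

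\textbf{Why the difference matters here.} The prefix actually concatenated in $\ket{\kappa_j}\otimes S_j\ket{v}$ is $\ket{\kappa_j}=K\ket{v_{\mathrm{e},j}}=\sum_k\alpha_{jk}\ket{h(\omega_k)}$, a superposition of Huffman codewords. In Section~\ref{numerical_example}, for example, $\ket{\kappa_1}\approx 0.924\ket{0}-0.383\ket{1}$. Such a prefix is a length eigenstate exactly when all codewords in its support share one length, which is what step~2(a) of Procedure~\ref{procedure_hirearchical-coding} enforces. It is not, however, a basis string. So your identification $\ket{\kappa_j}=\ket{x_1\cdots x_{\ell(\kappa_j)}}$ is not the paper's. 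The check you plan at the end, that the prefixes of Section~\ref{Section encoding procedure} are basis strings, would therefore fail; only the $\ket{h(\omega_k)}$ are.

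\textbf{The repair is short.} Take normalized $\ket{\psi}\in\mathcal{H}^{\otimes m}$, $\ket{b}\in\mathcal{H}^{\otimes \ell}$ and $\ket{b'}\in\mathcal{H}^{\otimes \ell'}$. When $\ell=\ell'$, the identification $\mathcal{H}^{\otimes(m+\ell)}\cong\mathcal{H}^{\otimes m}\otimes\mathcal{H}^{\otimes \ell}$ gives $\langle\psi b|\psi b'\rangle=\langle\psi|\psi\rangle\langle b|b'\rangle$. When $\ell\neq\ell'$, both sides vanish because the images lie in different summands. Equivalently, expand $\ket{\psi}=\sum_{|a|=m}\alpha_a\ket{a}$ and use that $ab=a'b'$ with $|a|=|a'|=m$ forces $a=a'$ and $b=b'$.

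\textbf{The block-orthogonality paragraph.} That paragraph goes beyond what the observation claims, and it relies on the same basis-string identification, which does not transfer to the paper's prefixes. For two prefixes of the same length, $\langle\kappa_j\sigma|\kappa_{j'}\sigma'\rangle=\langle\kappa_j|\kappa_{j'}\rangle\langle\sigma|\sigma'\rangle=\langle v_{\mathrm{e},j}|v_{\mathrm{e},j'}\rangle\langle\sigma|\sigma'\rangle$, because $K$ is an isometry. In Section~\ref{numerical_example} one has $\langle\kappa_1|\kappa_3\rangle=\langle v_{\mathrm{e},1}|v_{\mathrm{e},3}\rangle=1/\sqrt{2}\neq 0$. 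Prefix-freeness of $\{h(\omega_k)\}$ only separates prefixes whose supports in that set are disjoint, in particular prefixes of different lengths. For equal-length prefixes you need $\langle\kappa_j|\kappa_{j'}\rangle=0$, i.e.\ orthogonal ESP states, as an explicit extra hypothesis. Alternatively, simply drop that paragraph, since the observation does not need it.
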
 
This observation is consistent with Proposition~\ref{proposition encoder}, where we proved that the structure of the hierarchical encoder $\widehat{C}$ preserves the isometry. Hence, by implementing prefix-free codewords, Observation~\ref{observation concatenation} confirms that for each cluster $j$, the concatenated map $\ket{\kappa_j}\bra{\kappa_j}\otimes \widehat{S}_j$ is an isometry.

With those observations in place, we propose the following encoding procedure. It may be viewed as an extension of the Bostrom \cite{PhysRevA.65.032313} and Ahlswede \cite{ahlswede2003lossless} scheme, in which we incorporate a hierarchical structure and, in addition, modify the underlying mapping by embedding a prefix-free code, such as Huffman code, into the basis codewords. In particular, under the Huffman scheme, the source probability naturally corresponds to the eigenvalues of the source message matrix in the quantum setting. Once the eigenvalues are treated as classical probabilities, a standard Huffman algorithm can be used to generate classical codewords. This allows for a one-to-one mapping where each Huffman codeword is assigned to the associated eigenstate \cite{850709}.
\begin{procedure}[Huffman-based Hierarchical Encoding]\label{procedure_hirearchical-coding}
Suppose the ensemble $\{p(v),\ket{v}\}_{v\in V}$ represents a discrete quantum source.

\begin{enumerate}[label=\textbf{\arabic*)}]
% ---------------------------------------------------------
\item \textbf{Prefix stage.}

\begin{enumerate}[label=(\alph*)]
\item Diagonalize the local ESP source matrix
\[
\rho_\mathrm{e}=\sum_{v\in\mathcal{V}_\mathrm{e}} p(v)\ket{v}\bra{v}
\]
and use its eigenvectors $\{\ket{\omega_k}\}_{k=1}^{d_{\mathrm{e}}}$ as the orthonormal basis of $\mathcal{V}_{\mathrm{e}}\subset \mathcal{V}$.

\item Let $\{\lambda_k\}_{k=1}^{d_{\mathrm{e}}}$ denote the corresponding eigenvalues of $\rho_\mathrm{e}$. Use these eigenvalues as the canonical probabilities for coding:
\[
p(\omega_k):=\lambda_k=\bra{\omega_k}\rho_\mathrm{e}\ket{\omega_k},
\]
where $k=1,\dots,d_{\mathrm{e}}$.
\item Construct a binary Huffman code
\[
h:\{1,\dots,d_{\mathrm{e}}\}\rightarrow\{0,1\}^{\ell_k}
\]
of variable length $\ell_k$ for the probability distribution \(\{\lambda_k\}\), and represent each outcome by the quantum state \(\ket{h(\omega_k)}\).
\item Define the prefix isometry
\[
K=\sum_{k=1}^{d_{\mathrm{e}}}\ket{h(\omega_k)}\bra{\omega_k}.
\]
For each prefix node, express
$\ket{v_{\mathrm{e}j}}=\sum_k \alpha_{jk}\ket{\omega_k}$ so that 
$\ket{\kappa_j}=K\ket{v_{\mathrm{e}j}}$.
\end{enumerate}

% ---------------------------------------------------------
\item \textbf{Suffix stage.}
\begin{enumerate}[label=(\alph*)]
\item Choose $\ket{v_{e_j}}$ whose corresponding $\ket{c(v_{\mathrm{e}j})}$ is a length eigenstate. Then, set
$\mathcal{V}_j=\mathrm{span}\bigl(\{\ket{v_{\mathrm{e}j}}\}\cup\mathcal{V}_{\mathrm{u}j}\bigr)$ for a prescribed partition $\mathcal{V}_{\mathrm{u}j}\subseteq\mathcal{V}\setminus\mathcal{V}_{\mathrm{e}}$, where
$j=1,\dots,n_{\mathrm{e}}$.
\item For each cluster $j$, form the local source matrix
\(\rho_j=\sum_{v\in\mathcal{V}_j} p(v)\ket{v}\bra{v}\) and diagonalize it. Use the eigenvectors
\(\{\ket{\varphi_{j,k}}\}_{k=1}^{d_j}\) as the orthonormal basis of \(\mathcal{V}_j\).
\item Let \(\{\lambda_{j,k}\}_{k=1}^{d_j}\) be the eigenvalues of \(\rho_j\). Use these eigenvalues as the canonical probabilities:
\[
p(\varphi_{j,k}):=\lambda_{j,k}=\bra{\varphi_{j,k}}\rho_j\ket{\varphi_{j,k}},
\]
where $k=1,\dots,d_j$.
\item Construct a local binary Huffman code \(h_j\) for the distribution \(\{\lambda_{j,k}\}\)
and represent each outcome by the quantum length eigenstate \(\ket{h_j(\varphi_{j,k})}\).
\item Define the suffix isometry
\[
S_j=\sum_{k=1}^{d_j}\ket{h_j(\varphi_{j,k})}\bra{\varphi_{j,k}}.
\]
\end{enumerate}
% ---------------------------------------------------------
\item \textbf{Hierarchical encoding.}
For any $\ket{v}\in\mathcal{V}_j$,
\[
\ket{\widehat{c}(v)}
=\widehat{C}_j\ket{v}
=\ket{\kappa_j}\otimes S_j\ket{v}.
\]
% ---------------------------------------------------------
\item \textbf{Hierarchical decoding.}
Decoding is performed coherently by the adjoint isometry
\[
\widehat{C}^\dagger
=\bigoplus_{j=1}^{n_{\mathrm{e}}}\widehat{C}_j^\dagger,
\]
which reconstructs the state of original source message via 
$\ket{v}=\widehat{C}^\dagger\ket{\widehat{c}(v)}$.
\end{enumerate}
\end{procedure}
%===============================================

%===============================================
The procedure described above is only one of several possible approaches. In our case, when a local density operator \(\rho\) possesses degenerate eigenvalues, we may arbitrarily select any orthonormal basis within each degenerate eigenspace. Alternatively, for example, a Gram–Schmidt procedure can be employed when starting from a set of non-orthogonal source messages, as suggested in \cite{PhysRevA.65.032313}. 

Furthermore, any other prefix-free code can also be employed effectively. Our choice of the Huffman code is motivated by its optimality in attaining the Shannon entropy in the classical case \cite{gersho2012vector}. Depending on the statistical distribution of the source messages (traffic), alternative prefix codes may be more advantageous. For instance, the Golomb code is particularly well suited when the source messages follow a geometric distribution. Several implementation strategies are available for realizing the encoders $K$ and $S_j$, particularly to accommodate the space $\mathcal{H}^{\oplus\ell_{\mathrm{max}}}$, whose elements are vectors of variable length. One approach is the zero‑extended form (ZEF) of Schumacher and Westmoreland \cite{SchumacherWestmoreland2001}, which embeds indeterminate‑length strings into fixed‑length registers by padding with zeros. An alternative is to embed states into the tape Hilbert space of a quantum Turing machine (QTM) and implement coherent write/read operations (prefix parsing or delimiters) on the tape \cite{Mller2008QuantumBS}.

One of useful features in our procedure, the source message can be coherently reconstructed without the need for any projective measurements due to the isometry. Moreover, in contrast to the schemes proposed by Boström \cite{PhysRevA.65.032313} and Ahlswede \cite{ahlswede2003lossless}, our construction does not require any classical communication of codeword lengths or subspace indices for decoding the address codewords. This follows from the fact that the prefix, which is appended to each suffix, already encodes the subspace indices and is chosen to be an orthogonal length eigenstate. As investigated by Müller and Rogers \cite{Mller2008QuantumBS}, the concatenation for arbitrary non‑length‑eigenstate codewords remains an open problem. Thus, for the time being, we leave the cluster empty of user nodes if the prefix is not a length eigenstate to avoid operational complexity. The expense of this construction is a likely increase in the total codeword length due to the appending of the prefix.
%The upper bound of the average length is still under investigation.
 %==============================================================

 %===================================================================================================================
\section{Numerical Example} \label{numerical_example}
%===================================================================================================================
To demonstrate the effectiveness of hierarchical coding, we apply Procedure~\ref{procedure_hirearchical-coding} to a small network as a toy example. We consider a network comprising 13 nodes. Among these, 3 nodes function as ESP nodes within the tier‑2 layer. All nodes are partitioned into 3 clusters, each managed by a single ESP. In the tier‑1 layer, we specify that each cluster includes 6, 4, and 0 user nodes, respectively. We model the network nodes as a discrete information source specified by the ensemble \(V=\{p,\mathcal V\}\). Each cluster is modeled as an independent source, so the joint distribution factorizes as
\[
P(V)=\prod_{j=1}^{3}P(V_j),
\]
which is valid because the cluster preparations are independent and no cross‑cluster quantum correlations are present. Consequently, the ensemble decomposes into independent cluster components and separate prefix and local suffix codes are justified.

\paragraph{Prefix stage}
Suppose the ESP node probabilities are
\[
p(v_{\mathrm{e}1})=0.5,\quad p(v_{\mathrm{e}2})=0.3,\quad p(v_{\mathrm{e}3})=0.2.
\]
We work in a two‑dimensional Hilbert space \(\mathcal V_{\mathrm e}=\mathrm{span}\{\ket{0},\ket{1}\}\) and have prepared the ESP states
\[
\ket{v_{\mathrm{e}1}}=\ket{0},\quad \ket{v_{\mathrm{e}2}}=\ket{1},\quad
\ket{v_{\mathrm{e}3}}=\tfrac{1}{\sqrt{2}}(\ket{0}+\ket{1}).
\]
The local ESP source matrix is
\[
\rho_{\mathrm e}=\sum_{j=1}^3 p(v_{\mathrm{e}j})\ket{v_{\mathrm{e}j}}\bra{v_{\mathrm{e}j}}
=\begin{bmatrix}0.6 & 0.1\

\\0.1 & 0.4\end{bmatrix}.
\]
%with von Neumann entropy \(S(\rho_{\mathrm e})\approx 0.9403\).
Following Procedure~\ref{procedure_hirearchical-coding}, we diagonalize \(\rho_{\mathrm e}\), and obtain the eigenpairs
\[
\lambda_1\approx 0.641,\quad \ket{\omega_1}\approx\begin{bmatrix}0.924\
\\0.383\end{bmatrix},
\]
\[
\lambda_2\approx 0.358,\quad \ket{\omega_2}\approx\begin{bmatrix}-0.383\
\\0.924\end{bmatrix},
\]
with \(\langle\omega_1|\omega_2\rangle=\delta_{12}\). We then use the eigenvalue distribution \(\{\lambda_1,\lambda_2\}\) as the Huffman probabilities. For two symbols the natural prefix assignment is
\[
\ket{h(\omega_1)}=\ket{0},\qquad \ket{h(\omega_2)}=\ket{1},
\]
and the prefix isometry is
\[
K=\ket{0}\bra{\omega_1}+\ket{1}\bra{\omega_2}.
\]
% Note that \(K\) is not the identity in the computational basis because \(\{\ket{\omega_k}\}\) are not the computational basis vectors.
The address of \(v_{\mathrm{e}3}\) is obtained by $\ket{k(v_{\mathrm{e}3})}=K\ket{v_{\mathrm{e}3}}$.

%=======================================================================
\begin{table*}
    \centering
    \caption{The nodes under consideration grouped by cluster, with their address assignment.}
    \label{source_messages table}
    \begin{tabular}{c p{0.025\linewidth} p{0.025\linewidth} p{0.025\linewidth} p{0.15\linewidth} p{0.375\linewidth} c}
    \toprule
       Cluster  &  Node &  $p(v)$ & $p(v|j)$ & $\ket{v}$& $\ket{\hat{c}(v)}$& Fidelity \\
    \midrule
    1&$v_{11}$ & 0.15 &0.30& $\ket{00}$& $0.56\ket{0010}+0.37\ket{0011}-0.64\ket{0100}+0.06\ket{0110}+0.23\ket{1010}+0.15\ket{1011}-0.26\ket{1100}+0.02\ket{1110}$& 1\\
    1&$v_{21}$ & 0.07 & 0.14& $\ket{01}$& $-0.71\ket{0010} + 0.08\ket{0011} - 0.58\ket{0100} - 0.13\ket{0110} - 0.29\ket{1010} + 0.03\ket{1011} - 0.24\ket{1100} - 0.05\ket{1110}$& 1\\
    1&$v_{31}$ & 0.06& 0.12& $\ket{10}$& $0.14\ket{0010} - 0.82\ket{0011} - 0.33\ket{0100} + 0.22\ket{0110} + 0.06\ket{1010} - 0.34\ket{1011} - 0.14\ket{1100} + 0.09\ket{1110}$& 1\\
    1&$v_{41}$ & 0.06& 0.12& $\ket{11}$& $-0.17\ket{0010} + 0.19\ket{0011} + 0.04\ket{0100} + 0.89\ket{0110} - 0.07\ket{1010} + 0.08\ket{1011} + 0.02\ket{1100} + 0.37\ket{1110}$& 1\\
    1&$v_{51}$ & 0.05& 0.10& $0.71\ket{01}+0.71\ket{11}$& $-0.62\ket{0010} + 0.19\ket{0011} - 0.38\ket{0100} + 0.54\ket{0110} - 0.26\ket{1010} + 0.08\ket{1011} - 0.16\ket{1100} + 0.22\ket{1110}$& 1\\
    1&$v_{61}$ & 0.05& 0.10& $0.71\ket{10}+0.71\ket{11}$& $-0.03\ket{0010} - 0.45\ket{0011} - 0.21\ket{0100} + 0.78\ket{0110} - 0.01\ket{1010} - 0.18\ket{1011} - 0.09\ket{1100} + 0.32\ket{1110}$& 1\\
    1&$v_{71}$ & 0.06& 0.12& $0.50\ket{00}+0.50\ket{01}+0.50\ket{10}+0.50\ket{11}$& $-0.09\ket{0010} - 0.09\ket{0011} - 0.75\ket{0100} + 0.52\ket{0110} - 0.04\ket{1010} - 0.04\ket{1011} - 0.31\ket{1100} + 0.21\ket{1110}$& 1\\
    \midrule
    2& $v_{12}$& 0.08 & 0.267&$\ket{00}$& $-0.27\ket{0010} - 0.27\ket{0110} + 0.65\ket{1010} + 0.65\ket{1110}$& 1\\
    2& $v_{22}$ & 0.06& 0.2& $\ket{01}$& $0.27\ket{0011} - 0.27\ket{0100} - 0.65\ket{1011} + 0.65\ket{1100}$& 1\\
    2& $v_{32}$ & 0.05& 0.167& $\ket{10}$& $-0.27\ket{0010} + 0.27\ket{0110} + 0.65\ket{1010} - 0.65\ket{1110}$& 1\\
    2& $v_{42}$ & 0.06& 0.2& $\ket{11}$& $-0.27\ket{0011} - 0.27\ket{0100} + 0.65\ket{1011} + 0.65\ket{1100}$& 1\\
    2& $v_{52}$ & 0.05& 0.167& $0.71\ket{01}+0.71\ket{11}$& $-0.38\ket{0100} + 0.92\ket{1100}$& 1\\
    \midrule
    3& $v_{13}$& 0.2 & 1& $0.71\ket{0}+0.71\ket{1}$&$0.71\ket{0}+0.71\ket{1}$& 1 \\
    \bottomrule
    \end{tabular}    
\end{table*}
%=======================================================================

\paragraph{Suffix stage (cluster 1)}
Assume cluster 1 has conditional probabilities as specified in Table \ref{source_messages table}. We use a four‑dimensional Hilbert space
\[\mathcal V_1=\mathrm{span}\{\ket{00},\ket{01},\ket{10},\ket{11}\}\]
and prepare the node states (including the ESP) as source messages $\ket{v}$ in Table \ref{source_messages table}. The local source matrix is computed as
\begin{align*}   
\rho_1&=\sum_{i=1}^7 p(v_{i1}\!\mid\!v_{\mathrm e1})\ket{v_{i1}}\bra{v_{i1}} \\
&\approx
\begin{bmatrix}    
0.33 & 0.03 &  0.03 &  0.03 \\
0.03 & 0.22 &  0.03 &  0.08 \\
0.03 & 0.03 &  0.20 &  0.08 \\
0.03 & 0.08 &  0.08 &  0.25 
\end{bmatrix}.
\end{align*}
We diagonalize \(\rho_1\) and let \(\{\lambda_{k,1},\ket{\varphi_{k,1}}\}_{k=1}^{d_1}\) denote its eigenpairs. Using the spectral rule we take the eigenvalues as the Huffman probabilities. For this instance the diagonalization yields (rounded)
\[
\lambda_{1,1}\approx0.40,\, \lambda_{2,1}\approx0.29,\, \lambda_{3,1}\approx0.18,\, \lambda_{4,1}\approx0.13,
\]
which sum to unity up to rounding, and the orthonormal basis vectors $\ket{\varphi_{1k}}$ for $\mathcal{V}_1$ stacked as the columns of matrix 
\begin{align*}
    \Phi_1=
    \begin{bmatrix}
    0.606 &  -0.794 & -0.004 &  0.053 \\
    0.419 &   0.296 & -0.754 & -0.410 \\
    0.382 &   0.247 &  0.643 & -0.616 \\
    0.558 &   0.470 &  0.130 &  0.670
\end{bmatrix}.
\end{align*}
Next, we construct a local Huffman code \(h_1\) that assigns a shorter codeword to the larger eigenvalue. One valid assignment is
\begin{align*}
    &\ket{h_1(\varphi_{1,1})}=\ket{1},\quad &\ket{h_1(\varphi_{1,2})}=\ket{01},\quad \\
    &\ket{h_1(\varphi_{1,3})}=\ket{000},\quad &\ket{h_1(\varphi_{1,4})}=\ket{001},
\end{align*}
and the suffix isometry is
\begin{align*}    S_1&=\ket{0}\bra{\varphi_{1,1}}+\ket{01}\bra{\varphi_{1,2}} \\
&+\ket{000}\bra{\varphi_{1,3}}+\ket{001}\bra{\varphi_{1,4}}.
\end{align*}
For the implementation of the computation of the encoder $S_1$, we employ the ZEF scheme \cite{PhysRevA.65.032313, SchumacherWestmoreland2001}, which is selected due to its relatively straightforward and practically tractable realization. It provides a finite-dimensional Hilbert space and simple matrix representations of the encoders. In addition, we may introduce a row-selection (permutation) matrix \(P\) to aggregate the utilized padded indices while preserving the isometric property. In this setting, we can express
\begin{align*}
    S_1 = P\,\Gamma_1 \Phi_1^{\dagger},
\end{align*}
where \(\Gamma_1\) is constructed such that its columns are given by the ZEF of the basis codewords \(\ket{h(\varphi_{1,k})}\).

\paragraph{Suffix stage (cluster 2)}
In an analogous manner, we can compute the hierarchical encoder and decoder for cluster 2. Here, we choose the same basis as in cluster 1 to describe the source messages in cluster 2. With the probabilities listed in Table~\ref{source_messages table}, we obtain the corresponding source matrix
\begin{align*}   
\rho_2&=\sum_{i=1}^5 p(v_{i2}\!\mid\!v_{\mathrm e2})\ket{v_{i2}}\bra{v_{i2}} \\
&\approx
\begin{bmatrix}    
0.27 & 0 &  0 &  0 \\
0 & 0.28&  0&  0.08 \\
0 & 0&  0.17 &  0 \\
0 & 0.08 &  0&  0.28 
\end{bmatrix}.
\end{align*}
Accordingly, the spectral decomposition of the source matrix $\rho_2$ yields its associated eigenvalues, given by
\[
\lambda_{1,2}\approx0.37,\, \lambda_{2,2}\approx0.27,\, \lambda_{3,2}\approx0.20,\, \lambda_{4,2}\approx0.17,
\]
and its eigenvectors, stacked as columns in the matrix $\Phi_2$, given by
\begin{align*}
    \Phi_2=
    \begin{bmatrix}
    0 &  1 & 0 &  0 \\
    0.70 &   0 & 0.70 & 0 \\
    0 &   0 &  0 & -1 \\
    0.70 &   0 &  -0.70 &  0
\end{bmatrix}.
\end{align*}
As in cluster 1, we assign a Huffman codeword to each basis vector based on its eigenvalue distribution to obtain the suffix encoder
\begin{align*}    S_2&=\ket{0}\bra{\varphi_{2,1}}+\ket{01}\bra{\varphi_{2,2}} \\
&+\ket{000}\bra{\varphi_{2,3}}+\ket{001}\bra{\varphi_{2,4}}.
\end{align*}

\paragraph{Hierarchical encoding}
For any \(\ket{v}\in\mathcal V_j, \text{ with } j=1,2\) , we  obtain the address codeword
\[
\ket{\widehat c(v)}=\ket{\kappa_j}\otimes S_j\ket{v},
\quad 
\]
as provided in Table~\ref{source_messages table}. %Explain the resulted address, no need of orthogonality, dynamic assignment.
As shown in the table, the encoded addresses are represented as a superposition of 4-qubit strings that are constructed from the tensor product of the basis codewords assigned to the prefix and suffix. The encoded prefix ensures that codewords originating from different clusters lie in orthogonal subspaces, thereby preventing overlaps. This is shown by the evidence that even identical source messages appearing in both clusters are mapped to distinct address codewords and can be uniquely decoded. Meanwhile, within each cluster, the source messages and their corresponding codewords do not need to be orthogonal. The only requirement is an isometric encoder on the entire cluster subspace.

\paragraph{Decoding and Evaluation}
The source message $\ket{v}$ can be recovered from the address codeword by applying the decoder $D=\widehat C^\dagger$, computed as
\[\ket{v}=\widehat C^\dagger\ket{\widehat c(v)}.\] For each source message $\ket{v}$, we evaluate its fidelity  as
\[
F(v) =  \vert\bra{v}\widehat{C}^\dagger \ket{\widehat{c}(v)}\vert^2,
\]
which serves as a quantitative measure of the accuracy of the proposed address coding scheme. As shown in Table~\ref{source_messages table}, it is possible to attain fidelity 1, which confirms the degree to which the representation is lossless. %Discuss if there exist noise and QEC?
However, it is important to note that we examined this address coding in an ideal, noise-free environment. An extension to a noisy setting would be an interesting direction for future work, for example, by equipping the address with an appropriate quantum error-correcting scheme. 

In addition, we also asses the isometry error of any encoder $X$ by computing
\[\
\epsilon_{\mathrm{iso}}(X):=\Vert X^\dagger X - I\Vert_2.
\]
As shown in Table \ref{numerical_validation table}, each encoder preserves its isometry with only a negligible error due to finite numerical precision. Altogether, the numerical evaluation confirms the isometry property stated in Proposition~\ref{proposition encoder}, as well as the decodability guarantee established in Proposition~\ref{proposition decoder}.

\begin{table}[t]
\centering
\caption{Numerical Evaluation of the hierarchical encoders.}
\label{numerical_validation table}
\begin{tabular}{c c c}
\hline
Encoder & Size & Isometry error \\
\hline
$K$& $2\times 2$& $8.48 \times 10^{-8}$\\
$S_1$ & $8\times 4$ & $5.87\times 10^{-16}$  \\
$S_2$& $8\times 4$ & $6.14\times 10^{-16}$  \\
$\widehat{C}_1$& $16\times 4$&$2.54 \times10^{-7}$\\
$\widehat{C}_2$& $16\times 4$& $2.54 \times10^{-7}$\\
$\widehat{C}$& $16\times 8$& $1.70\times10^{-7}$\\
\hline
\end{tabular}
\end{table}

\section{Conclusion}\label{conclusion}
%===================================================================================================================

We have shown that addressing in quantum networks is naturally framed as a lossless coding problem. By introducing a hierarchical prefix–suffix address space and implementing isometric, prefix‑free encoders and their adjoints, we ensure that addresses are uniquely decodable and compatible with coherent processing. Our Huffman‑based procedure demonstrates how the address structure can be exploited to produce length‑eigenstate codewords, thereby avoiding classical side channels for subspace indices. The design explicitly accommodates heterogeneous cluster sizes and dynamic reconfiguration, allowing address assignment to be adjusted as the entanglement topology evolves. Future work will analyze performance under realistic noise models, explore multi‑layer recursive hierarchies, and investigate their use for different network functionalities. Overall, lossless address coding provides a principled, scalable foundation for addressing in next‑generation quantum networks.

\IEEEpeerreviewmaketitle

\section*{Use of AI Disclosures}
%%===================================================================================================================
% This work has been funded by the European Union under Horizon Europe
% ERC-CoG grant QNattyNet, n.101169850. Views and opinions expressed are
% however those of the author(s) only and do not necessarily reflect those of the European Union or the European Research Council Executive Agency. Neither
% the European Union nor the granting authority can be held responsible for
% them.
%The authors would like to thank...

Some portions of this manuscript were refined using AI-assisted tools for grammar, clarity, and organization. All AI outputs were reviewed, edited, and verified by the authors. All concepts and technical content are solely those of the authors.

% trigger a \newpage just before the given reference
% number - used to balance the columns on the last page
% adjust value as needed - may need to be readjusted if
% the document is modified later
%\IEEEtriggeratref{8}
% The "triggered" command can be changed if desired:
%\IEEEtriggercmd{\enlargethispage{-5in}}

% references section

% can use a bibliography generated by BibTeX as a .bbl file
% BibTeX documentation can be easily obtained at:
% http://mirror.ctan.org/biblio/bibtex/contrib/doc/
% The IEEEtran BibTeX style support page is at:
% http://www.michaelshell.org/tex/ieeetran/bibtex/
\bibliographystyle{IEEEtran}
\bibliography{References}

\nocite{*} % This includes all entries from the .bib file in the bibliography

% argument is your BibTeX string definitions and bibliography database(s)
%\bibliography{References}
%
% <OR> manually copy in the resultant .bbl file
% set second argument of \begin to the number of references
% (used to reserve space for the reference number labels box)

% that's all folks
\end{document}